\DeclareSymbolFont{rsfscript}{OMS}{rsfs}{m}{n}
\DeclareSymbolFontAlphabet{\mathrsfs}{rsfscript}
\spnewtheorem{fact}{Fact}{\bfseries}{\itshape}
\newcommand{\keywords}[1]{\par\addvspace\baselineskip
\noindent\keywordname\enspace\ignorespaces#1}
\newif\ifdebug
\definecolor{note}{RGB}{70,70,70}
\newcommand*\samethanks[1][\value{footnote}]{\footnotemark[#1]}
\begin{document}
\title{A New Heuristic Synchronizing Algorithm}
\author{Jakub Kowalski\thanks{Supported in part by Polish MNiSZW grant IP2012 052272}, Marek Szyku{\l}a\samethanks}
\institute{Department of Mathematics and Computer Science, University of Wroc{\l}aw \\
\email{\{kot,msz\}@ii.uni.wroc.pl}}
\maketitle

\begin{abstract}
We present a new heuristic algorithm finding reset words. The algorithm called CutOff-IBFS is based on a simple idea of inverse breadth-first-search in the power automaton. We perform an experimental investigation of effectiveness compared to other algorithms existing in literature, which yields that our method generally finds a shorter word in an average case and works well in practice.

\keywords{Synchronizing automata, synchronizing algorithm, reset word}
\end{abstract}

\section{Introduction}
Synchronizing automata are important in various fields, such as model-based testing \cite{BJKLP2005}, robotics (for designing so-called part orienters) \cite{AV2003}, bioinformatics (the reset problem) \cite{BAPLS2003}, network theory \cite{Ka2002}, theory of codes \cite{Ju2008} etc. In many applications it is important to find a reset word as short as possible. Unfortunately the problem of finding a shortest reset word is shown to be $\mathrm{FP^{NP[log]}}$-hard, and the related decision problem is both NP- and coNP-hard \cite{OM2010} (cf. also \cite{Be2010} and \cite{Ma2009,Ma2011} for approximation hardness and special classes).

Nevertheless there exist many exponential algorithms to deal with this problem \cite{KRW2012,RS1993,Sa2005,ST2011,Tr2006} and a lot of polynomial heuristics finding relatively the shortest reset words \cite{GH2011,KRW2012,Ro2009,Roman2009,Tr2006}. Recently we developed an exact algorithm based on a bidirectional-breadth-first search, which is currently the fastest algorithm for the problem \cite{KKS2013}. Our new heuristic algorithm was also used as a part of our exact algorithm.

The algorithm, called CutOff-IBFS, is a heuristic polynomial synchronizing algorithm finding a synchronizing word within the given length. We analyzed some properties of the algorithm and performed an experimental investigation of the algorithm and compared with some others existing in literature. Our method, although simple, seems to be more efficient in practice and generally finds shorter reset words in less time. For simplicity, we describe only the version which find the length of the found word, and later consider a suitable modification to find the word itself.

\section{Description of the Algorithm}
Before we run the main part of the algorithm we must run some other algorithm to obtain a reset word and an upper bound on the length of the shortest reset words. This is necessary as we do not have a guarantee that it finds any word in all cases. To have this property it can be easily combined with some other algorithm. We use well-known Eppstein algorithm \cite{Ep1990} as a preceding algorithm. Then we apply the length of the found word into CutOff-IBFS as $\mathtt{maxlen}$. It is preferred to use a fast preceding algorithm with not greater complexity than CutOff-IBFS, to not increase the overall complexity.

The formal description is given in Algorithm~\ref{alg_cutoffibfs}. We start from the list $L$, which contains all singletons of the states. Then in each step we create the set of preimages of the sets from $L$ by each letter. The sets are stored in the trie $T_{ic}$, which allows us to exclude all duplicated sets. Next we create the resulted list $L'$ by taking only the $\mathtt{maxsize}$ largest sets from $T_{ic}$. We repeat the steps until we obtain the complete set or we have done $\mathtt{maxlen}$. In the second case we use the word found by the preceding algorithm as a result.

\begin{algorithm}\caption{CutOff-IBFS}\label{alg_cutoffibfs}
\begin{algorithmic}[1]
\Require $A = \langle Q,\Sigma,\delta \rangle$ -- an automaton with $n=|Q|$ states and $k =|\Sigma|$ input letters.
\Require $\mathtt{maxlen}$ -- maximum length of words to be checked.
\Require $\mathtt{maxsize}$ -- maximum size of the lists.
\Procedure{CutOff-IBFS}{$A,\mathtt{maxlen},\mathtt{maxsize}$}
\State $L \gets$ \Call{EmptyList}{}
\ForAll{$q \in Q$}
  \State \Call{$L$.insert}{$\{q\}$}
\EndFor
\For{$l \gets 1$ \textbf{to} $\mathtt{maxlen}$}
  \State $T_{ic} \gets$ \Call{EmptyTrie}{} \Comment{A trie to store next sets}
  \ForAll{$S \in L$}
    \ForAll{$a \in \Sigma$}
      \State $S' = \delta(S,a)$ \Comment{Compute the image of $S$ by $a$}
      \If{$|S'|=n$} \Comment{The goal test}
        \State \Return $l$
      \EndIf    
      \State \Call{$T_{ic}$.insert}{$S'$}
    \EndFor
  \EndFor
  \State $L' \gets$ \Call{EmptyList}{}
  \State \Call{$L'$.insert}{$T_{ic}$}
  \State Sort $L'$ by descending cardinalities.
  \State Trim $L'$ to $\mathtt{maxsize}$ length, by truncating from the end.
  \State $L \gets L'$
\EndFor
\State \Return ``Not found a synchronizing word of length $\leq \mathtt{maxlen}$''
\EndProcedure
\end{algorithmic}
\end{algorithm}

\subsection{Analysis}
The correctness of the algorithm comes from the fact that we can assign a word to each computed set, so that an image of a set by its assigned word is a singleton. The word assigned to the final set $S'$ of size $n$ would be a synchronizing one.

Complexity of the algorithm depends of the given $\mathtt{maxsize}$ parameter. The larger $\mathtt{maxsize}$ is the slower the algorithm works and uses more memory, but it finds shorter words. Complexity of the algorithm depends linearly on $c$ so it is easy to set a desirable trade-off. Also it works fast when it finds a short word, and this is the case for random automata. A trie is used to skip sets already stored in it, keeping insertion time in $O(n)$.

\begin{theorem}CutOff-IBFS works in $O(lckn+kn^2)$, where $l$ is the length of the found reset word and $c=\mathtt{maxsize} \ge 1$. The space complexity is $O(ckn)$.\end{theorem}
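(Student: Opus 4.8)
The plan is to pin down the cost of each elementary operation, bound the work of one pass of the main \textbf{for}-loop as a function of $|L|$, and then sum over the passes, treating the very first pass (where $|L|=n$) separately from all later ones (where $|L|\le c$). First I would fix the data structures: a subset of $Q$ takes $O(n)$ space; each $S'=\delta(S,a)$ is again a subset of $Q$, so $|S'|\le n$ and $S'$, together with its cardinality, is produced in $O(n)$ time by one scan of $Q$; the trie $T_{ic}$ stores subsets of $Q$ so that inserting a set (silently dropping it if already present) costs $O(n)$ and enumerating all stored sets costs time linear in the trie's total size. Since every cardinality lies in $\{1,\dots,n\}$, I would order $L'$ by decreasing cardinality with a bucket sort in $O(|L'|+n)$ time, which removes any logarithmic factor, and truncate it in $O(|L'|)$.

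Next I would analyse one pass with $|L|=m$. The inner double loop runs $mk$ times, each time computing one $S'$ and doing one trie insertion, hence $O(mkn)$ time; consequently $T_{ic}$ holds at most $mk$ sets, so emptying it into $L'$, bucket-sorting and truncating take $O(mkn+n)=O(mkn)$ (as $m\ge1$ and $k\ge1$). Thus one pass runs in $O(mkn)$ time and, keeping only the automaton together with $L$, $T_{ic}$ and $L'$, in $O(mkn)$ space. For the first pass $m=n$, giving $O(kn^2)$ time; here I would additionally note that, the computed sets being preimages as the text explains, the preimages of the singletons by any fixed letter partition $Q$, so the first pass stores in $T_{ic}$ sets of total size only $O(kn)$ and uses $O(kn)$ space there, comfortably within the claimed $O(ckn)$. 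For every later pass $L$ is a truncated list, so $m\le c$ and the pass costs $O(ckn)$ in time and space.

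Finally I would sum up. Initialising $L$ with the $n$ singletons is $O(n)$. The loop returns $l$ the moment a set of size $n$ appears; if instead it runs all $\mathtt{maxlen}$ passes, the surrounding algorithm outputs the reset word of the preceding (Eppstein) algorithm, whose length equals $\mathtt{maxlen}$, so in every case the number of passes actually performed is at most the length $l$ of the returned reset word. Adding the first pass to at most $l-1$ later passes gives total time $O(kn^2+lckn)=O(lckn+kn^2)$; and since at any instant only the current pass's structures are live, each of them being $O(ckn)$ (the first pass's being even smaller), the space is $O(ckn)$. I expect the only real subtlety to be exactly this bookkeeping around the first pass — preventing its $\Theta(kn^2)$ running time from contaminating the per-pass $O(ckn)$ estimate, and (symmetrically) its trie from violating the $O(ckn)$ space bound, which is where the partition remark is needed — together with the observation that identifies the number of executed passes with the output length $l$.
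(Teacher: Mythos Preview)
Your argument is correct and follows essentially the same decomposition as the paper's proof: per-operation $O(n)$ costs for preimage computation and trie insertion, counting/bucket sort for the reordering, and a split between the first iteration with $|L|=n$ (contributing the $O(kn^{2})$ term) and the remaining at most $l-1$ iterations with $|L|\le c$ (contributing the $O(lckn)$ term). Your treatment is in fact slightly more careful than the paper's in two places: you explicitly tie the number of executed passes to the returned length $l$ via the surrounding Eppstein call, and you invoke the partition property of singleton preimages to keep the trie of the first pass within $O(kn)$ space, a point the paper's space argument does not spell out.
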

\begin{proof}
Initialization in lines~2--5 takes $O(n^2)$ time ($n$ sets). We assume now the worst case, that the word is not found and the \textbf{for} loop in line~6 runs $l$ times. During all the executions except the first we have at most $c$ sets in $L$. Computing preimages of the sets (line~10) by each of the letter can be done in $O(ckn)$ time ($O(n)$ for each preimage). Insertion the preimages into a trie (line~14) is done in $O(ckn)$ time ($O(n)$ for an insertion). Sorting $L$ (line~19) can be done in $O(ckn)$ time by counting sort. During the first execution of the \textbf{for} loop there are $n$ sets at the beginning, so it takes $O(kn^2)$ time. The total time complexity is $O(lckn+kn^2)$. Space is required to store the list $L$ and the trie $T_c$ during all steps, which yields $O(ckn)$.
\end{proof}

Together with preceding Eppstein algorithm (see \cite{Ep1990}, it works in time $O(kn^2+n^3)$ and $O(kn^2)$ space), and taking $l \in O(n^3)$, the algorithm has $O(ckn^4)$ time complexity. The total space complexity is $O(ckn+kn^2)$.

Inverse breadth-first-search is very effective for some of the most extremal automata (with the longest reset lengths). Similar as our exact algorithm \cite{KKS2013} works in polynomial time for them, our heuristic algorithm always find the shortest reset word in these cases.

\begin{theorem}\label{th_cutoffibfs_slowlysynchro}
When $\mathtt{maxsize} \ge n$, Algorithm~\ref{alg_cutoffibfs} always finds the shortest reset word for the \v{C}ern\'{y} automaton $\mathrsfs{C}_n$ and the slowly synchronizing series introduced in \textrm{\cite{AGV2012}} $\mathrsfs{D'}_n$,$\mathrsfs{W}_n$,$\mathrsfs{F}_n$,$\mathrsfs{E}_n$,$\mathrsfs{D''}_n$,$\mathrsfs{B}_n$,$\mathrsfs{G}_n$,$\mathrsfs{H}_n$.
\end{theorem}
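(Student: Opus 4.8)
The plan is to compare Algorithm~\ref{alg_cutoffibfs} with the \emph{uncut} inverse breadth-first search: start from the family of all singletons $\{q\}$, $q\in Q$, repeatedly replace the current family by $\{\delta^{-1}(S,a)\ :\ S\text{ in the family},\ a\in\Sigma\}$, and stop as soon as $Q$ appears. By the word-assignment observation already made in the analysis, the level at which $Q$ first appears in this search equals the reset threshold $t$ (the length of a shortest reset word), and conversely a set reachable at level $l$ gives a reset word of length $l$. Hence it suffices to prove that, when $\mathtt{maxsize}=c\ge n$, the trimming step never discards a set lying on a shortest inverse-BFS path to $Q$. So I would fix a shortest reset word of the given automaton $\mathrsfs{A}_n$, read it backwards to get a path $\{q_0\}=S_0,S_1,\dots,S_t=Q$ with $S_l=\delta^{-1}(S_{l-1},a_l)$, and show by induction on $l$ that after iteration $l$ the list $L$ contains $S_l$. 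The base case $l=0$ is immediate (the list is exactly the $n$ singletons, untouched by trimming), and the inductive step needs two facts: $S_l$ is \emph{generated}, because it is a preimage-child of $S_{l-1}$, which is kept by the induction hypothesis; and $S_l$ \emph{survives the trim}, i.e.\ it is among the $c\ge n$ largest sets produced at level $l$. Once $S_{t-1}$ is kept, its child $Q$ is produced and the goal test returns $l=t$; and since every value the algorithm can return is the length of some reset word, it cannot be smaller than $t$, so it is exactly $t$.

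The core is therefore the statement ``$S_l$ is among the $n$ largest sets at level $l$'', to be proved family by family using the explicit automaton descriptions in \cite{AGV2012}. Two structural features recur. First, in each of these automata one letter acts as a cyclic permutation of (essentially all) the states, while the remaining letter(s) act as the identity off a bounded set and merge only a bounded number of states; consequently the sets reachable by inverse BFS from singletons stay inside a structured class — for $\mathrsfs{C}_n$ exactly the cyclic intervals of $\mathbb{Z}_n$, and for the other series the analogous ``interval-like'' sets — so that for every cardinality $m<n$ there are at most $n$ reachable sets of size $m$. Second, a single preimage step raises cardinality by at most one, and such a rise can occur only from a set in a special position relative to the non-permutation letter, after which the permutation letter must run through almost a full rotation before the next rise is possible; this is exactly the mechanism behind the $\Theta(n^2)$ lower bounds on the reset lengths of these series in \cite{AGV2012} — the size of the current set can grow by at most one roughly every $n$ steps.

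Combining these, on the greedy inverse-BFS path (enlarge the current set as soon as its position allows) the cardinality $|S_l|$ is a step function coinciding with the maximum cardinality $M_l$ attainable at level $l$. For $\mathrsfs{C}_n$ this is transparent: one checks $|S_l|=M_l=2+\lfloor (l-1)/n\rfloor$, so $|S_l|=n$ first at $l=(n-1)^2$, the \v{C}ern\'y threshold; and since sets of size exceeding $M_l$ do not occur, the sets of cardinality $\ge|S_l|$ at level $l$ are just the reachable cyclic intervals of size $M_l$, of which there are at most $n$ — so $S_l$ survives the trim to $c\ge n$ regardless of how ties in the sort are broken. For the remaining series the same bookkeeping must be carried out against each automaton's definition; in a few of them the greedy path sits one cardinality below the running maximum at the transition levels, and there one verifies directly that the number of sets of cardinality $\ge|S_l|$ is still $\le n$, using that only a bounded number of rotational positions of the larger ``intervals'' can have been reached within the available number of steps.

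The main obstacle is precisely this last per-family verification. The \v{C}ern\'y automaton is the clean prototype, but for the bushier members of the list — those whose non-permutation letters move more than two states, or which are built on three letters — one must first pin down the exact class of reachable preimage-sets, then exhibit the greedy shortest-word path and count the sets of each near-maximal cardinality at every level. Each such check is finite and follows the same template, yet it is laborious, and making the cardinality accounting tight enough that the bound is genuinely $n$ (and not a larger multiple of $n$) is the delicate point.
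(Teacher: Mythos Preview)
Your proposal is correct and follows essentially the same approach as the paper: both fix a shortest reset word, argue by induction on the level that the corresponding preimage set $S_l$ is always retained after trimming, and for $\mathrsfs{C}_n$ verify this by tracking how the maximal reachable cardinality grows by one roughly every $n$ steps so that at most $n$ sets ever compete with $S_l$; both then defer the remaining series to an analogous case-by-case check. Your write-up is more explicit in setting up the inductive scaffolding and the cardinality formula $M_l=2+\lfloor(l-1)/n\rfloor$, but the substance and the level of completeness (the other families are left as routine verifications in both) are the same.
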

\begin{proof}
Consider the automaton $\mathrsfs{C}_n$. We will show that a set corresponding with the shortest reset word is kept in the list during all executions of the main loop in line~6, that is after $i$ steps there will be a set $S$ which is a preimage of some singleton by the word consisted of the $i$ last letters of the shortest reset word. This is true at the beginning since we have all the singletons in the list.

After the first step (the first execution of the main loop in line~7) there will be a set of size $2$ in the list and the singletons, since it is kept because of the sorting in line~19. After each of the $(n-1)$ next steps there is added a new set of size $2$ and all of them are kept since there are only $n$ such sets after the $n$-th step. In the $(n+1)$-th step 
a set of size $3$ is created and it is the largest set in the list. We can continue in this way and after each of part of $(n-1)$ steps there will be a step introducing a new set of size greater by $1$, while within a part each step introduces exactly one set of the currently largest size. There are $(n-1)$ steps introducing a larger set and $(n-2)$ parts consisted of $(n-2)$ steps. So in the $((n-1)^2)$-th step there will be introduced the final set of size $n$.

In a similar way one may follow execution of the algorithm for the other automata series to see that a set corresponding with the shortest reset word is kept in the list, and so the algorithm finds the shortest reset word.
\end{proof}

\subsection{Finding a Reset Word}
The algorithm can also return the found word, not only its length. It however increases space complexity, but the time complexity can be kept. We cannot store complete words together with the sets since they may be long ($l$), so creating a new set would take $O(n+l)$ time due to copying issue. Instead we can store, together with a set, the applied letter and a pointer to the information stored for the preceding set. This additional information costs a constant space for a set and can be computed in a constant time for a preimage set. However we need to preserve this information for each constructed set, so it will take additional $O(cl+n)$ space. This would yield total space complexity $O(c(l+kn)+kn^2)$ space complexity, while keeping time complexity in $O(lckn+kn^2)$.

\subsection{Technical Improvements}
We discuss here some technical improvements of the algorithm. They can reduce running time and space but within a constant factor.

First of all, we can skip sorting from the main loop. Instead we can maintain $n$ tries for each possible cardinality of a set. Then we can construct a list by taking sets from these tries from the largest to the smallest.

Another simple improvement comes from the fact, that we can start only from the singletons of states from the sink component, as well as only from the states having in-degree at least $2$ on some letter.

One may also permute the automaton before running the algorithm. Since the sets during computation are usually small, it could be better to have states occurring more frequently to be at the top of tries. It would lead to have smaller heights of the tries and so to faster execution. A simple heuristic method to do that is sorting the states decreasing by their in-degrees.

\section{Empirical Behavior}

\begin{figure}[!ht]
 \centering
 \epsfig{file=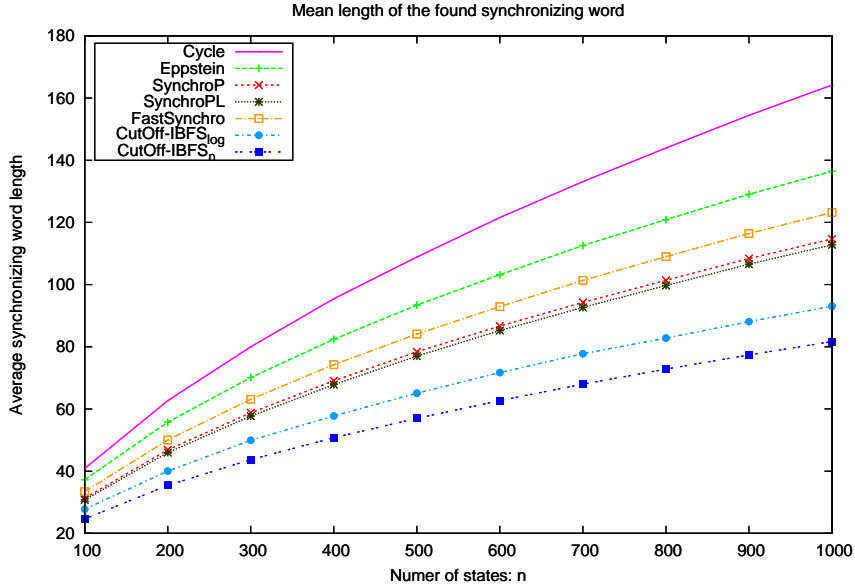, width=80mm, angle=-90}
 \caption[]{The mean found length by the algorithms for random automata.}
 \label{fig:heur_mean}
\end{figure}

\begin{figure}[!ht]
 \centering
 \epsfig{file=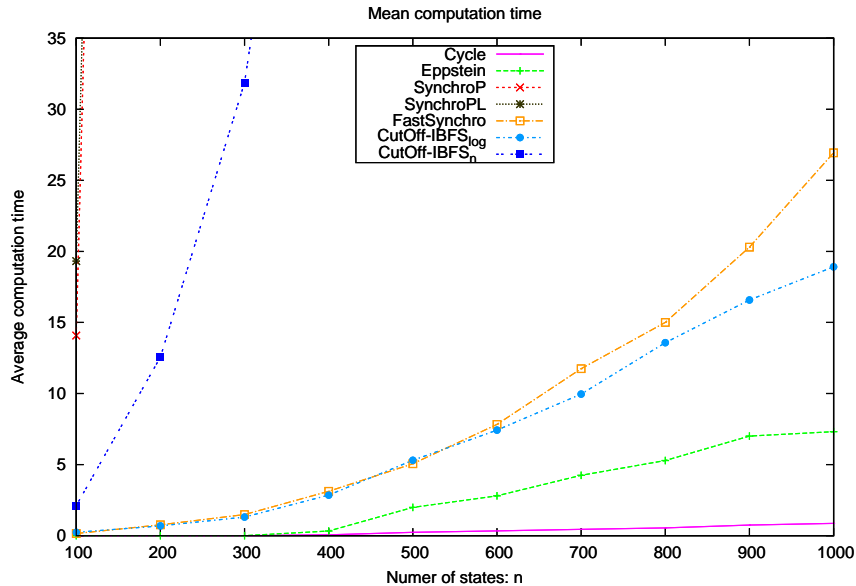, width=80mm, angle=-90}
 \caption[]{The average execution time of the algorithms for random automata.}
 \label{fig:heur_time}
\end{figure}

We have compared some known synchronizing algorithms in terms of the quality (found word lengths) and the execution time. The experiment was performed on $10,000$ uniformly random labeled automata for $n=100,200,\ldots,1000$ (each algorithm worked on the same automata set). Figure \ref{fig:heur_mean} shows the mean length found by the algorithms which is generally the quality measure. Figure \ref{fig:heur_time} shows the average execution time used by the algorithms.

We presented two versions of CutOff-IBFS, with $c=\log n$ and $c=n$. See \cite{Roman2009,KRW2012} for the other algorithms. We can see that CutOff-IBFS found shorter reset words than the other algorithms and the $\log n$ version is slower only than two fastest Eppstein and Cycle algorithms.

Since there may be different versions of these algorithms or they may be run with different parameters, which can affect the results of the experiment, we describe here some implementation details. For CutOff-IBFS we included time used by Eppstein algorithm. We used the greedy versions of Eppstein algorithm \cite{Ep1990} and Cycle algorithm \cite{Roman2009,KRW2012}, which always selects a pair of states with the shortest synchronizing word. For the tree of the pair automaton we used $n$ as the constant used for marking nodes (this is $y$ constant in Algorithm~2 from \cite{Ep1990}).

\bibliographystyle{plain}
\bibliography{bibliography}
\end{document}